\declaretheorem{theorem}
\theoremstyle{definition}
\newtheorem{definition}{Definition}
\DeclareMathOperator{\conv}{conv}
\newcommand{\updownarrows}{\uparrow\!\downarrow}
\renewcommand{\paragraph}[1]{\addcontentsline{toc}{section}{#1}\emph{#1.}---}
\begin{document}

\title{Ruling out nonlinear modifications of quantum theory with contextuality}

\author{Ruben Campos Delgado}
\email{ruben.camposdelgado@itp.uni-hannover.de}
\author{Martin Plávala}
\email{martin.plavala@itp.uni-hannover.de}
\affiliation{Institut für Theoretische Physik, Leibniz Universität Hannover, 30167 Hannover, Germany}

\begin{abstract}
Nonlinear modifications of quantum theory are considered potential candidates for the theory of quantum gravity, with the intuitive argument that since Einstein field equations are nonlinear, quantum gravity should be nonlinear as well. Contextuality is a property of quantum systems that forbids the explanation of prepare-and-measure experiments in terms of classical hidden variable models with suitable properties. We show that some well-known nonlinear modifications of quantum mechanics, namely the Deutsch’s map, the Weinberg’s model, and the Schrödinger–Newton equation, map a contextual set of states to a non-contextual one. That is, the considered nonlinear modifications of quantum theory allow for the existence of classical hidden variable models for certain experimental setups. This enables us to design experiments that would rule out the considered nonlinear modifications of quantum theory by verifying that the system remains contextual, or, equivalently, our results highlight a mechanism how nonlinear modification of quantum theory may lead to weak wave function collapse and ultimately to the solution of the measurement problem.
\end{abstract}

\maketitle

\paragraph{Introduction}%
Nonlinear modifications of quantum theory have a long tradition dating back to de Broglie \cite{deBroglie1960}. From a fundamental point of view, it is possible that the theory of quantum gravity might involve nonlinear dynamics \cite{Karolyhazy:1966zz,DIOSI1984199,Diosi:1986nu,Penrose:1996cv,Penrose:1998qc}, and such modifications suggest experimental tests of quantum mechanics itself \cite{Weinberg62.485, WEINBERG1989336}. From a computational point of view, nonlinear dynamics have the potential to solve NP-complete problems in polynomial time \cite{Abrams3992}. However, all these endeavors face a problem when relativity is taken into account. In 1989 Gisin \cite{gisin1989stochastic,Gisin1990} showed that nonlinear quantum dynamics would lead to superluminal signaling. This effect was investigated repeatedly \cite{polchinski1991weinberg,lucke1999nonlocality,gruca2024correlations}, yielding both claims that no superluminal signaling implies linearity \cite{caticha1998consistency,svetlichny2005nonlinear,jordan2009quantum,bassi2015no} and nonlinear modifications of quantum theories that do not feature superluminal signaling \cite{gisin1995relevant,ferrero2004nonlinear,kent2005nonlinearity,zloshchastiev2010logarithmic,rembielinski2020nonlinear,kaplan2022causal,geszti2024nonlinear,Bielinska:2024trs}. Certain nonlinear effects in quantum theory were also proposed to be experimentally tested \cite{sahoo2022testing,diosi2025causality} and some of the experiments were also carried out \cite{donadi2021underground,polkovnikov2023experimental,broz2023test}.

Contextuality is a fundamental concept which captures the idea that measurements in quantum theory cannot be considered as revealing pre-existing values of a specific classical hidden variable. The original notion of contextuality \cite{Kochen:1967}, applying only to projective measurements, was extended by Spekkens to include unsharp measurements and eventually general operational theories \cite{Spekkens:2005vtt}. Contextuality was observed to be a resource for quantum computation \cite{howard2014contextuality}, state discrimination \cite{schmid2018contextual}, or useful for testing dimensions of quantum systems \cite{haakansson2025experimental}. Contextuality is also connected to other non-classical aspects of quantum theory such as entanglement \cite{plavala2024contextuality}, measurement incompatibility and steering \cite{tavakoli2020measurement,plavala2022incompatibility}, or Bell nonlocality \cite{wright2023invertible}. More recently, a complete algebraic characterization of contextuality has been proposed \cite{Frembs:2024oxm, Frembs:2025tyw}.

\begin{figure}
\centering
\includegraphics[width=0.66666\linewidth]{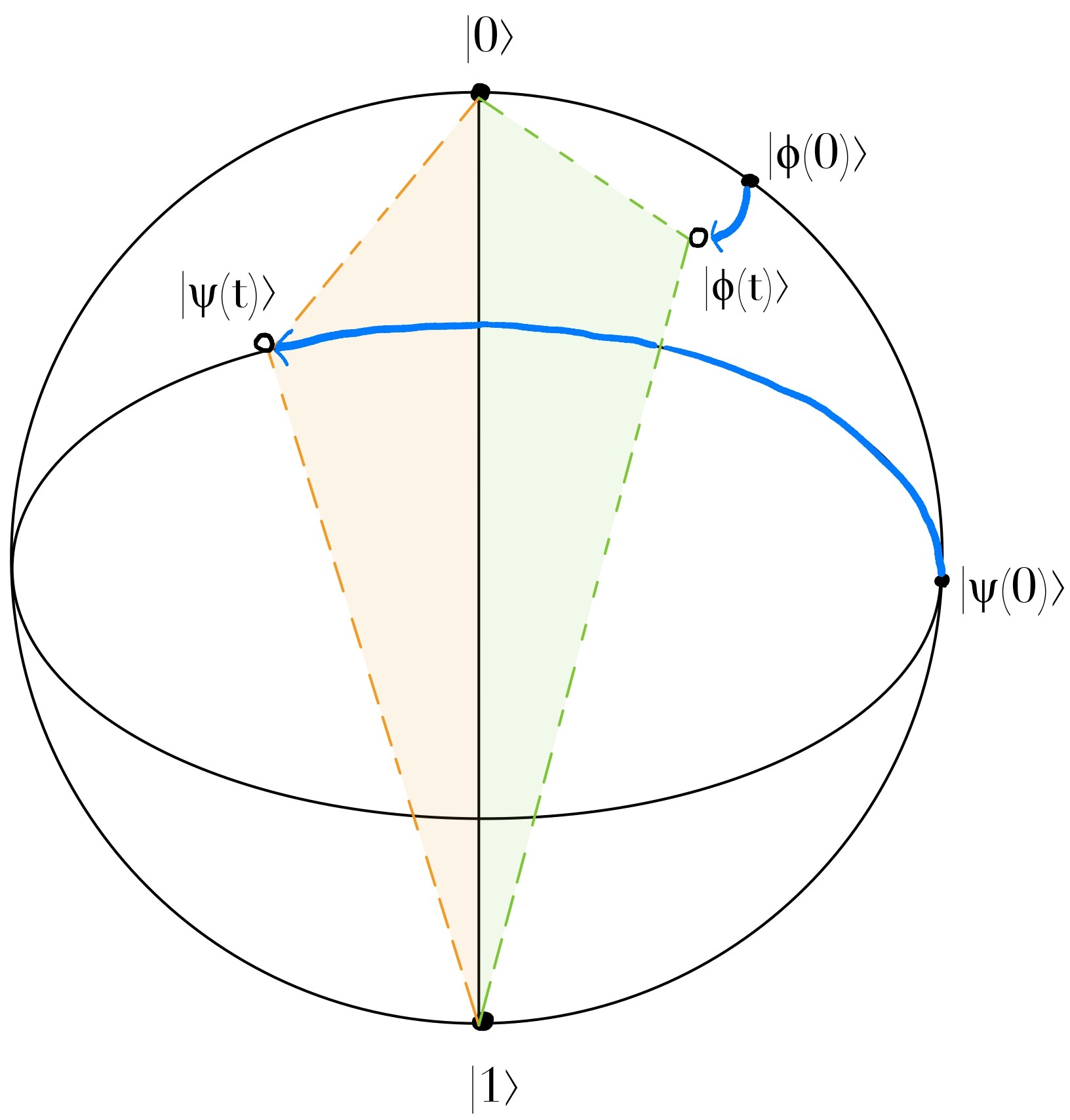}
\caption{Example of an action of a nonlinear map acting on a qubit. The black dots are the initial states. $\ket{0}$ and $\ket{1}$ are invariant under the nonlinear time evolution, the other states move along the blue arrows on the surface of the Bloch sphere. The nonlinear map displaces states which were initially on the same plane to different planes. Thus contextual quantum states become non-contextual.}
\label{fig: fig1}
\end{figure}

In this paper we show that Spekkens contextuality can be used to rule out proposed nonlinear modifications of quantum theory in a prepare-transform-measure type experiments that involves only single quantum systems. Using only single quantum systems avoids the action of the nonlinear dynamics on a bipartite systems and thus we do not consider any problems caused by superliminal signaling. We focus on three well-known, physically motivated examples: the Deutsch's cloning map \cite{Deutsch:1991nm} based on closed time-like curves, the Weinberg's model \cite{Weinberg62.485, WEINBERG1989336} and the Schrödinger-Newton equation \cite{Penrose:1996cv, Penrose:1998qc}. We consider an initial set of states that is contextual, and we show that the considered models predict that the initial set becomes non-contextual, thereby all violations of non-contextuality inequalities \cite{Spekkens2009, Kunjwal2018, Mazurek2016, Pusey2018} vanish. Since the Schrödinger-Newton equation lacks analytical solutions even in the simplest cases, we use the iterative approach developed by Grossardt \cite{Grossardt:2021ytg} and show that the result holds in the first iteration. We prove the following theorem in order to decide whether a set of states is contextual or not: a set of pure states is non-contextual if and only if their density matrices are linearly independent operators. Then for all of the considered nonlinear models we show that the nonlinear dynamics maps linearly dependent, hence contextual, density matrices to linearly independent, hence non-contextual, ones; see Fig.~\ref{fig: fig1} for a demonstration. This feature of nonlinear dynamics can be tested experimentally. If the states remain contextual for all the duration of the experiment, and thus they always violate the non-contextuality inequalities, then this dismisses the aforementioned nonlinear models.

The paper is structured as follows: we give a proper formal definition of a non-contextual set of states and we prove a necessary and sufficient condition for a set of pure states to be contextual. Then we explicitly study the evolution of an initial contextual set of states under nonlinear dynamics in three specific cases: Deutsch's cloning map \cite{Deutsch:1991nm}, the Weinberg's model \cite{Weinberg62.485, WEINBERG1989336} and the Schrödinger-Newton equation \cite{Penrose:1996cv, Penrose:1998qc}. We show that all of them map an initial contextual set of states to a non-contextual one. 

\paragraph{Spekkens contextuality}%
The Bell-Kochen-Specker theorem \cite{Bell:1966, Kochen:1967, Budroni:2021rmt} rules out the existence of a local non-contextual hidden variable model of quantum theory. Such a non-contextual model is one wherein the measurement outcome that occurs for a particular set of values of the hidden variables depends only on the Hermitian operator associated with the measurement and not on which Hermitian operators are measured simultaneously with it. Throughout our work we will use the notion of contextuality as defined by Spekkens \cite{Spekkens:2005vtt}. We will always assume that the set of possible measurements includes at least all projective measurements (PVMs), this enables us to define the notion of non-contextuality and contextuality of set of states.

We will use $\mathcal{H}$ to denote a finite-dimensional complex Hilbert space and $\mathcal{D}(\mathcal{H})$ the set of density matrices, that is, positive semidefinite operators with unit trace. A projection-valued measure (PVM) $\{\Pi_k\}_k$ is a collection of projectors, $\Pi_k = \Pi_k^2$, that sums to the indentity operator, $\sum_k \Pi_k = \mathds{1}$.
\begin{definition}\label{def1}
Let $I$ be a set of indices, the set $\{\rho_i\}_{i\in I} \subseteq \mathcal{D}(\mathcal{H})$ is said to be non-contextual if, for any PVM $\{\Pi_k\}_k$, the probability of observing outcome $k$ can be expressed as
\begin{equation}\label{eq1}
\Tr(\rho_i \Pi_k) = \sum_{\lambda\in \Lambda} \Tr(\rho_i G_{\lambda})\Tr(\sigma_\lambda \Pi_k) \quad \forall i\in I,\,\, \forall \Pi_k
\end{equation}
where $\Lambda$ is an index set, $\Tr(\rho_i G_{\lambda})\geq 0$, $\Tr(\sigma_{\lambda} \Pi_k)\geq 0$, and $\Tr(\sigma_\lambda) = 1$ for all $\lambda \in \Lambda$ and $\sum_{\lambda} \Tr\left(\rho_i G_{\lambda}\right) = 1$ for all $i \in I$.
\end{definition}
Since we consider all PVMs $\{\Pi_k\}_k$, the condition $\Tr(\sigma_{\lambda} \Pi_k)\geq 0$ implies that $\{\sigma_{\lambda}\}_{\lambda \in \Lambda}$ are quantum states. If Eq.\eqref{eq1} does not hold, the set $\{\rho_i\}_{i\in I}$ is called contextual. The index $\lambda$ is usually referred to as hidden variable.

Note that if $\{\rho_i\}_{i\in I} \subseteq \mathcal{D}(\mathcal{H})$ is non-contextual, then also $\conv(\{\rho_i\}_{i\in I})$, the set of all convex combinations of $\rho_i$, is non-contextual. Hence one can, without the loss of generality, restrict to the extreme points of $\{\rho_i\}_{i\in I}$.

We can immediately formulate the following theorem connecting linear independence of the density matrices with contextuality. Special cases of this result have been observed before \cite{beltrametti1995classical}.
\begin{theorem} \label{th1}
Let $\mathcal{H}$ be a Hilbert space, let $\mathcal{D}(\mathcal{H})$ be the space of states in $\mathcal{H}$ and let $I$ be a set of indices. If the states  $\{\rho_i\}_{i\in I} \subseteq \mathcal{D}(\mathcal{H}) $ are linearly independent, then they are non-contextual. 
\end{theorem}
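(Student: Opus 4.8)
\emph{Proof plan.}---The plan is to construct an explicit non-contextual hidden variable model for $\{\rho_i\}_{i\in I}$ from the basis dual to it with respect to the Hilbert--Schmidt inner product. First I would note that $I$ may be taken finite: on a finite-dimensional $\mathcal{H}$ the real vector space of Hermitian operators has dimension $(\dim\mathcal{H})^2$, so a linearly independent family of density matrices has at most that many members. On this vector space $\langle A,B\rangle=\Tr(AB)$ is a genuine inner product — real, symmetric, and positive definite, since $\Tr(A^2)>0$ for every nonzero Hermitian $A$. Hence the Gram matrix $M_{ij}=\Tr(\rho_i\rho_j)$ of the linearly independent family $\{\rho_i\}_{i\in I}$ is positive definite and in particular invertible.

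Next I would write the model down. Let the hidden variable range over $\Lambda=I$, and for $j\in I$ put
\begin{equation}
G_j=\sum_{l\in I}(M^{-1})_{jl}\,\rho_l,\qquad \sigma_j=\rho_j .
\end{equation}
Each $G_j$ is Hermitian, and since $M$ (and hence $M^{-1}$) is symmetric, $\Tr(\rho_i G_j)=\sum_{l}(M^{-1})_{jl}M_{il}=\delta_{ij}$, so $\{G_j\}$ is precisely the dual basis of $\{\rho_i\}$ inside their span. It then only remains to check the four conditions in Definition~\ref{def1}. The response functions obey $\Tr(\rho_i G_j)=\delta_{ij}\geq 0$ and $\sum_{j\in I}\Tr(\rho_i G_j)=1$ for all $i$; the $\sigma_j=\rho_j$ are density matrices, so $\Tr(\sigma_j\Pi_k)\geq 0$ for every PVM $\{\Pi_k\}_k$ and $\Tr(\sigma_j)=1$; and for any PVM,
\begin{equation}
\sum_{j\in I}\Tr(\rho_i G_j)\,\Tr(\sigma_j\Pi_k)=\sum_{j\in I}\delta_{ij}\,\Tr(\rho_j\Pi_k)=\Tr(\rho_i\Pi_k),
\end{equation}
which is exactly Eq.~\eqref{eq1}. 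Note that one and the same $\Lambda$, $\{G_j\}_{j}$, $\{\sigma_j\}_{j}$ serves for all PVMs at once.

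Since the argument is this short, there is no genuine obstacle; the single point one must be careful about is that Definition~\ref{def1} constrains the operators $G_\lambda$ only through the numbers $\Tr(\rho_i G_\lambda)$ and never demands that $G_\lambda$ be positive semidefinite. The dual-basis elements $G_j$ are generally not positive operators, yet they remain admissible in the sense of the definition, and this is what makes the construction work; were one to insist that the $G_\lambda$ form an honest POVM the argument would fail. (The reverse implication — linear dependence forcing contextuality, for pure states — is a separate statement and is where the real effort goes.)
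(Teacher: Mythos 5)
Your proof is correct and follows essentially the same route as the paper: construct the dual basis $\{G_j\}$ of the linearly independent family, take $\sigma_j=\rho_j$, and observe that the response functions $\Tr(\rho_i G_j)=\delta_{ij}$ satisfy all the requirements of Definition~\ref{def1}. Your version is in fact slightly more careful, since you construct the dual basis explicitly via the Gram matrix and correctly note that the $G_j$ need not be positive (the paper's assertion that the dual basis lies in $\mathcal{D}(\mathcal{H})$ is neither needed nor generally true).
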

\begin{proof}
Since the states are linearly independent, there exists a dual basis $\{F_j\}_{j\in I }\subseteq \mathcal{D}(\mathcal{H})$ such that $\Tr(\rho_i F_j)=\delta_{ij}$ for all $i,j \in I$. Therefore, $\rho_i=\sum_{j}\delta_{ij}\rho_j=\sum_{j} \Tr(\rho_i F_j)\rho_j$ for all $i\in I$. For any POVM $\{M_k\}_k$, it then follows that $\Tr(\rho_i M_k)=\sum_j \Tr(\rho_i F_j)\Tr(\rho_j M_k)$ for all $i\in I$.
\end{proof}

In general, the converse is not true as sufficiently noisy quantum systems will be non-contextual \cite{marvian2020inaccessible}. Nevertheless, if the states are pure, then linear independence becomes a necessary and sufficient condition for non-contextuality, as partially observed in \cite{schmid2018contextual,Rossi:2022jsr} in a specific case of the set of four pure states of a qubit.
\begin{restatable}{theorem}{thMain} \label{th3}
Let $\mathcal{H}$ be a Hilbert space, let $\mathcal{D}(\mathcal{H})$ be the space of states in $\mathcal{H}$, let $I$ be a set of indices, and let $\{\rho_i\}_{i\in I} \subseteq \mathcal{D}(\mathcal{H})$ be a set of pure states. The set $\{\rho_i\}_{i\in I}$ is non-contextual if and only if the density matrices of the states $\rho_i$ are linearly independent.
\end{restatable}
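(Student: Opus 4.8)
The plan is to reduce the statement to its nontrivial half. One implication, ``linearly independent $\Rightarrow$ non-contextual'', is exactly Theorem~\ref{th1} and does not even use purity. So I would assume the pure states $\{\rho_i\}_{i\in I}$, which we may take to be pairwise distinct, are non-contextual and deduce that the $\rho_i$ must be linearly independent; contrapositively, a linearly dependent family of pure states is contextual.

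First I would upgrade Eq.~\eqref{eq1} from a statement about outcome probabilities to an operator identity. Writing $p_i(\lambda):=\Tr(\rho_i G_\lambda)\geq 0$, so that $\sum_\lambda p_i(\lambda)=1$ and (as noted after Definition~\ref{def1}) each $\sigma_\lambda$ is a state, I would feed Eq.~\eqref{eq1} the two-outcome PVM $\{\Pi,\mathds{1}-\Pi\}$ with $\Pi=\ketbra{v}{v}$ a rank-one projector. This gives $\langle v|\rho_i|v\rangle=\sum_\lambda p_i(\lambda)\langle v|\sigma_\lambda|v\rangle$ for every unit vector $v$ (the series converging in norm since $\sum_\lambda p_i(\lambda)=1$ and $\|\sigma_\lambda\|\leq 1$), and since a Hermitian operator whose expectation vanishes in every state is zero, this upgrades to
\begin{equation*}
\rho_i=\sum_{\lambda\in\Lambda}p_i(\lambda)\,\sigma_\lambda \qquad\text{for all } i\in I,
\end{equation*}
so each pure $\rho_i$ is a convex combination of the states $\sigma_\lambda$.

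Next I would invoke purity. Since $\rho_i$ is an extreme point of $\mathcal{D}(\mathcal{H})$, it follows that $\sigma_\lambda=\rho_i$ for every $\lambda$ in $\Lambda_i:=\{\lambda\in\Lambda:p_i(\lambda)>0\}$; these sets are nonempty (because $\sum_\lambda p_i(\lambda)=1$) and, since the $\rho_i$ are distinct, pairwise disjoint. Fixing $i$ and choosing $\lambda_0\in\Lambda_i$, I then have $\Tr(\rho_i G_{\lambda_0})=p_i(\lambda_0)>0$ while $\Tr(\rho_j G_{\lambda_0})=p_j(\lambda_0)=0$ for every $j\neq i$. Applying the linear functional $X\mapsto\Tr(XG_{\lambda_0})$ to any relation $\sum_j c_j\rho_j=0$ then forces $c_i\,\Tr(\rho_i G_{\lambda_0})=0$, hence $c_i=0$; as $i$ is arbitrary, $\{\rho_i\}_{i\in I}$ is linearly independent. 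Combined with Theorem~\ref{th1}, this yields the equivalence.

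I expect the main obstacle to be the first step: Eq.~\eqref{eq1} a priori constrains only the probabilities of PVM outcomes, and one has to notice that letting $\Pi$ range over all rank-one projectors already pins down the $\rho_i$ as operators (together with a convergence check when $\Lambda$ is infinite). After that the argument is short, its conceptual core being simply that distinct pure states force disjoint ontic supports $\Lambda_i$, so the functionals $\Tr(\cdot\,G_{\lambda_0})$ separate the $\rho_i$ and exhibit their linear independence.
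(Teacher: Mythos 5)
Your proof is correct, and it takes a genuinely different route from the paper's. Both arguments share the first step of upgrading Eq.~\eqref{eq1} to the operator identity $\rho_i=\sum_\lambda \Tr(\rho_i G_\lambda)\sigma_\lambda$ (you are in fact more careful here, noting that two-outcome rank-one PVMs already suffice and that convergence is harmless). After that the paper invokes purity indirectly: it defines $\Theta(\rho_i)=\sum_\lambda\Tr(\rho_i G_\lambda)\,\sigma_\lambda\otimes\sigma_\lambda$, uses the purity of the marginals (``monogamy of entanglement'') to force $\Theta(\rho_i)=\rho_i\otimes\rho_i$, and then derives a contradiction from a linear dependence $\sum_i\alpha_i\rho_i=0$ by comparing coefficients of the tensor products $\rho_i\otimes\rho_j$ — a step that requires an additional reduction to a minimal linearly dependent subset so that the remaining states (and hence their tensor products) are linearly independent. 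You instead use purity directly via extremality: each $\rho_i$ being an extreme point of $\mathcal{D}(\mathcal{H})$ forces $\sigma_\lambda=\rho_i$ on the ontic support $\Lambda_i=\{\lambda: \Tr(\rho_i G_\lambda)>0\}$, distinctness makes these supports pairwise disjoint, and the functionals $X\mapsto\Tr(XG_{\lambda_0})$ with $\lambda_0\in\Lambda_i$ then kill every coefficient in a putative linear relation. Your route is shorter and more elementary — no tensor products, no monogamy, no minimal-dependent-subset reduction — and it isolates the conceptually relevant fact (distinct non-contextual pure states must have disjoint ontic supports, which immediately yields a dual family witnessing linear independence); the paper's construction, by contrast, makes explicit the tension between non-contextuality and the no-broadcasting structure of pure states. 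The only point to flag is your reduction to pairwise distinct states: if two indices label the same density matrix the family is trivially linearly dependent as an indexed family, so this convention should be stated as it is (implicitly) in the paper.
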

We relegate the proof to the Appendix.
We now have all the necessary tools to show that some nonlinear modifications of quantum theory transform a set of linearly dependent density matrices to a set of linearly independent density matrices. Theorem~\ref{th3} then implies that a contextual set of states is transformed to a non-contextual set of states. One would be tempted to conclude that any nonlinear modification of quantum theory exhibits this effect, but we construct a counter-example in the Appendix.

\paragraph{Deutsch's cloning map}%
The first case we investigate is the nonlinear map that perfectly clones a pure quantum state by using closed time-like curves \cite{Deutsch:1991nm,brun2013quantum}:
\begin{equation}
    \phi(\rho)= \rho \otimes \rho.
\end{equation}
Let us consider the initial pure states at the antipodes of the Bloch sphere, $\{\dyad{0}, \dyad{1}, \dyad{+}, \dyad{-}\}$, where $\ket{0}, \ket{1}$ represent the standard computational basis and $\ket{\pm} = \frac{\ket{0} \pm \ket{1}}{\sqrt{2}}$. The density matrices are linearly dependent since $\dyad{0} + \dyad{1} = \dyad{+} + \dyad{-}$. After applying the map, the tensor product states become
linearly independent: one can verify this by explicitly writing down the respective matrices.

\paragraph{Weinberg's model}%
The second case we investigate is the Weinberg's model of nonlinear quantum mechanics  \cite{Weinberg62.485, WEINBERG1989336}. The Hamiltonian is a real homogeneous nonbilinear function $h(\psi,\psi^*)$ of degree one, and the evolution of a $k$-dimensional state vector is governed by the equation 
\begin{equation}\label{eq:weinberg}
\dfrac{\partial \psi_k}{\partial t} = -i \frac{\partial h}{\partial \psi^*_k},
\end{equation}
where $\psi_k$ is the $k$-th component of the state vector, i.e., $\psi_k = \bra{k} \ket{\psi}$ where $\ket{k}$ is the respective element of the computational basis.

We will now restrict to the case of a single qubit. A generic pure state of a qubit can described in terms of the angles $\vartheta \in [0,\pi)$ and $\varphi \in [0,2\pi)$ as
\begin{equation}
\begin{split}
\ket{\psi} =
\begin{pmatrix} \cos(\vartheta) \\ \sin(\vartheta) e^{i \varphi}
\end{pmatrix}.
\end{split}
\end{equation}
For a qubit, one can always rewrite the Hamiltonian $h(\psi,\psi^*)$ as $h = \bar{h}(\vartheta)$ in terms of the angle $\vartheta$. The solution of Eq. \eqref{eq:weinberg} is \cite{Abrams3992}
\begin{equation}\label{eq: new_state}
\ket{\psi(t)} = \begin{pmatrix} \cos(\vartheta) \\ \sin(\vartheta) e^{i (\varphi - \omega(\vartheta) t)} \end{pmatrix}
\end{equation}
where
\begin{equation}
\omega(\vartheta) = \frac{\bar{h}'(\vartheta)}{2\sin\theta\cos\vartheta}.
\end{equation}
The full derivation is in the Appendix. The time evolution rotates the states around the Bloch sphere, but the angular speed at which the states move depends on $\vartheta$. In order to find a set of states that is initially contextual but is transformed to a non-contextual set via the nonlinear time evolution, we  choose the initial states to lie on a fixed meridian of the Bloch sphere as then their density matrices are linearly dependent. The nonlinear time evolution will then move the states to different meridians, making them linearly independent and hence non-contextual, as also demonstrated in Fig.~\ref{fig: fig1}.

Let us then consider four pure states $\lvert \psi_i\rangle$, $i=1,\dots 4$, described by angles $\vartheta_i$, where we set $\varphi_i = 0$ for all $i$. Let $\vartheta_3 = 0$ and $\vartheta_4 = \pi/2$ so that $\ket{\psi_3} = \ket{0}$ and $\ket{\psi_4} = \ket{1}$ are the computational basis states that are invariant under the nonlinear time evolution. Without loss of generality, we can assume that there are $\vartheta_1, \vartheta_2 \in (0, \frac{\pi}{2})$ such that $\omega(\vartheta_1) \neq \omega(\vartheta_2)$. In fact, if $\omega(\vartheta)$ were constant for all $\vartheta \in (0, \frac{\pi}{2})$, then the time evolution would coincide with rotation of the Bloch sphere along the axis, which is generated by linear dynamics.

We now show that the density matrices of the initial states evolve into linearly-independent density matrices. 
While checking the linear dependence/independence, we are faced with the following set of equations, for some real numbers $\alpha_i$:
\begin{equation}\label{eq: conditions}
\begin{split}
\sum_{i=1}^4 \alpha_i \cos^2\vartheta_i = 0 \\
\sum_{i=1}^4 \alpha_i \sin^2\vartheta_i = 0 \\
\sum_{i=1}^4 \alpha_i \cos\vartheta_i\sin\vartheta_i e^{\pm i \omega(\vartheta_i)t} = 0.
\end{split}
\end{equation}
Now,  $\cos\theta_1\sin\theta_1\neq 0$ so $\alpha_1$ can be written in two different ways, depending on the positive or negative sign in the exponent:
\begin{equation}
    \alpha_1 = -\frac{1}{\cos\theta_1\sin\theta_1}\sum_{i=2}^{4}\cos\theta_i\sin\theta_i 
    e^{\pm i \left[ \omega(\vartheta_i)-\omega(\vartheta_1) \right]t }. 
\end{equation}
Equating both expressions and removing the terms corresponding to $\theta=0$ and $\theta=\pi/2$, we get 
\begin{equation}\label{eq: last}
    \alpha_2 \cos\theta_2\sin\theta_2 \sinh \left[ \omega(\vartheta_2)-\omega(\vartheta_1)\right]t = 0.
\end{equation}
We argued earlier that $\omega(\vartheta_1)\neq \omega(\vartheta_2)$. Therefore, Eq. \eqref{eq: last} automatically implies $\alpha_2 = 0$ and by Eq. \eqref{eq: conditions} we also have $\alpha_2=\alpha_3=\alpha_4=0$.
The conclusion is that the original linearly dependent states become linearly independent after evolving according to the Weinberg's map.

\paragraph{Schrödinger-Newton equation}%
Our last example is the Schrödinger-Newton (SN) equation proposed by Diosi and Penrose to explain the gravity-driven collapse of the quantum mechanical wave function. \cite{Diosi:1986nu, Penrose:1996cv, Penrose:1998qc}. We consider the set-up proposed in \cite{Grossardt:2021ytg}, where a spin-$\frac{1}{2}$ particle of mass $m$ is sent through a non-uniform magnetic field directed along the $z$ axis. The state vector of the particle undergoes the transformation
\begin{equation}
\begin{pmatrix}
\alpha \\ \beta 
\end{pmatrix}
\mapsto
\begin{pmatrix}
\alpha \psi_{\uparrow}\\ \beta \psi_{\downarrow}
\end{pmatrix},
\end{equation}
where $\psi_{\updownarrows}$ obey the SN equation:
\begin{equation}\label{eq: SN}
i\hbar \frac{\partial \psi_{\updownarrows}}{\partial t} = \left(-\frac{\hbar^2}{2m}\nabla^2 -\mathbf{F}_{\updownarrows}\cdot \mathbf{r}+|\alpha|^2U_{\uparrow}+|\beta|^2U_{\downarrow}\right)\psi_{\updownarrows},
\end{equation}
with $\mathbf{F}$ representing the compound effects of the Newtonian gravitational and magnetic forces, and
\begin{equation}
\begin{split}
U_{\updownarrows}(t,\mathbf{r})=\int d^3 \mathbf{r}' \, \lvert \psi_{\updownarrows}(t,\mathbf{r})\rvert^2I_{\rho}(\mathbf{r}-\mathbf{r}'),\\
I_{\rho}(\mathbf{d})=-G_{N}\int d^3 \mathbf{r} d^3\mathbf{r}'\, \frac{\rho(\mathbf{r})\rho(\mathbf{r}')}{\lvert \mathbf{r}-\mathbf{r}'+\mathbf{d} \rvert}.
\end{split}
\end{equation}
Eq. \eqref{eq: SN} can be solved recursively, by first setting $U_{\updownarrows}$ to zero. At each iteration, $U_{\updownarrows}$ is calculated from the solution at the previous step.
The solution factorizes as $\psi_{\updownarrows}(t,\mathbf{r})=e^{\tau_{\updownarrows}(t,\mathbf{r})}\chi_{\updownarrows}(t,\mathbf{r})$, where $\tau_{\updownarrows}$ is a phase which depends solely on the classical trajectory, and is irrelevant for our discussion. At first order, $\chi_{\updownarrows}(t,\mathbf{r})$ satisfies the equation \cite{Grossardt:2021ytg}
\begin{equation}
    i\hbar \frac{\partial \chi^{(1)}_{\updownarrows}}{\partial t}(t,\mathbf{r}) = \left(-\frac{\hbar^2}{2m}\nabla^2 +U^{(0)}_{\updownarrows}(t,\mathbf{r})\right)\chi^{(1)}_{\updownarrows},
\end{equation}
where
\begin{equation}
\begin{split}
U^{(0)}_{\uparrow}(t,\mathbf{r}) &= |\alpha|^2\Tilde{U}(t,\mathcal{r})+|\beta|^2\Tilde{U}(t,\lvert \mathbf{r}+\Delta\mathbf{u}(t)\rvert)\\
U^{(0)}_{\downarrow}(t,\mathbf{r}) &= |\alpha|^2\Tilde{U}(t,\lvert \mathbf{r}-\Delta\mathbf{u}(t)\rvert)+|\beta|^2\Tilde{U}(t,\mathbf{r}),\\
\tilde{U}(t,\mathbf{r}) &= \int d^3 \mathbf{r}'\, \lvert \chi^{(0)}(t,\mathbf{r}')\rvert^2 I_{\rho}(\mathbf{r}-\mathbf{r}'),\\
\Delta \mathbf{u} &= \mathbf{u}_{\uparrow}-\mathbf{u}_{\downarrow}
\end{split}
\end{equation}
and where $\mathbf{u}_{\updownarrows}$ solves the equations of motion
\begin{equation}
\ddot{\mathbf{u}}_{\updownarrows} = \frac{\mathbf{F}_{\updownarrows}(t)}{m}.
\end{equation}
The $z$-component of the solution at first order is \cite{Grossardt:2021ytg} 
\begin{equation}
\chi^{(1)}_{\updownarrows}(t,z)=(2\pi A_{\updownarrows})^{-1/4}\exp\left[-\frac{\left(z-\langle z\rangle_{\updownarrows}\right)^2}{4A_{\updownarrows}}\right]e^{i\phi_{\updownarrows}(t,z)},
\end{equation}
where 
\begin{equation}
\phi_{\updownarrows} = \frac{\left(z-\langle z\rangle_{\updownarrows}\right)^2B_{\updownarrows}}{4A_{\updownarrows}\hbar}+\frac{\langle p \rangle_{\updownarrows}z+f_{\updownarrows}}{\hbar}
\end{equation}
and the coefficients are defined as follows
\begin{equation}
\begin{split}
A_{\updownarrows} &= \langle z^2 \rangle_{\updownarrows}-\langle z \rangle^2_{\updownarrows},\\
B_{\updownarrows} &= \langle zp+pz\rangle_{\updownarrows}-2\langle z \rangle_{\updownarrows}\langle p \rangle_{\updownarrows},\\
f_{\updownarrows} &= -\frac{\langle z \rangle_{\updownarrows} \langle p \rangle_{\updownarrows}}{2}-\frac{\hbar^2}{4m}\int_0^t \frac{dt'}{A_{\updownarrows}(t')} \\
&\hphantom{=}\; -\int_0^tdt'\, \left( U^{(0)}_{\updownarrows|_{z=0}}+\frac{1}{2}\partial_zU^{(0)}_{\updownarrows | _{z=0}}\langle z \rangle_{\updownarrows}(t') \right).
\end{split}
\end{equation}
The two phases depend on $\delta = |\alpha|^2-|\beta|^2$ and they are different from each other \cite{Grossardt:2021ytg}. Hence, the evolution of states resembles the Weinberg's dynamics of the previous example, see Eq. \eqref{eq: new_state} and Fig.~\ref{fig: fig1}. Accordingly, a contextual set of states is mapped to a non-contextual one.

\paragraph{Conclusions}
We have shown that nonlinear dynamics has the ability to map contextual sets of quantum states to non-contextual sets. Our results provide a blueprint in how to design fundamental experiments similar to the measurement of violations of Bell inequalities that would rule out various nonlinear modification of quantum theory based on non-existence of a non-contextual hidden variable model. For the Deutsch's cloning map and for the Weinberg’s model the result is immediate and explicit sets of states that become non-contextual due to the nonlinear transformation were indentified. For the Schrödinger–Newton equation we were restricted by the lack of analytic solutions of the nonlinear differential equation, but we showed that in the first iteration of the solutions constructed by Grossardt \cite{Grossardt:2021ytg} the same effect occurs and the nonlinear time evolution causes a wave function-dependent phase shift.

Our results can be directly used to construct experimental test of the nonlinear modifications of quantum theory. To do so, one needs to identify a contextual set of quantum states that becomes non-contextual due to the nonlinear dynamics. Since the initial set of states is contextual, one can use available software to find a non-contextuality inequality it violates \cite{selby2024linear}, and then the violation can be experimentally verified \cite{Mazurek2016,Pusey2018}. After the nonlinear time evolution is assumed to have occurred, a non-contextuality inequality will be measured again. If the nonlinear time evolution occurred, then the final set of states will be non-contextual and the non-contextuality inequality will not be violated. We note that the final states not violating any non-contextuality inequality is not a definitive proof that the nonlinear dynamics occured, since the final set of states can also become non-contextual due to decoherence \cite{marvian2020inaccessible}. But if the final states violate some non-contextuality inequality, the nonlinear time evolution is experimentally ruled out.

Our result can be interpreted to argue that while nonlinear dynamics do not have to cause explicit wave function collapse, we demonstrated that they have the ability to enable non-contextual hidden variable description of certain sets of states. That is, our results show that nonlinear dynamics may cause a weak wave function collapse where the wave function does not collapse to a classical stochastic state, but a description of sets of states in terms of a hidden variable model becomes possible. Thus we provide a generalization of previous results investigating classical aspects of nonlinear quantum theory \cite{haag1978comments,bugajski1991nonlinear,sahoo2023emergence}. The weak wave function collapse can also be a potential foundational way to address some known obstacles in collapse theories \cite{diosi2025healthier}.

The remaining question is how to experimentally test concrete nonlinear modifications of quantum theory. Unlike recent experiments ruling out the Diósi-Penrose collapse model \cite{donadi2021underground}, the experiments employing contextuality would require direct measurement of the quantum systems. Here we are facing a theoretical challenge: our results as stated apply only to discrete systems described by finite-dimensional Hilbert spaces. In order to properly test various nonlinear modifications of quantum theory, we will need to develop a continuous-variable generalization of contextuality. Given such generalization, it should be possible to generalize theorems \ref{th1} and \ref{th3}, with the caveat that linear independence of the density matrices will likely need to be replaced by a more suitable notion pertaining to density matrices on infinite-dimensional Hilbert spaces, or in general to states on operator algebras. With these results in hand, one can look for atom interferometric or other setups \cite{aspelmeyer2022zeh} where the sought effect could be observed, in correspondence with previously-proposed setups \cite{sahoo2022testing}. We leave these tasks to future endeavors.

\begin{acknowledgments}
\paragraph{Acknowledgments}
We are thankful to Shawn Skelton for useful discussion.
MP acknowledges support from the Niedersächsisches Ministerium für Wissenschaft und Kultur.
\end{acknowledgments}

\bibliography{citations}

\onecolumngrid
\appendix

\section{Proof of Theorem~\ref{th3}} \label{appendix:proof-th3}
\thMain*
\begin{proof}
We already argued in Theorem~\ref{th1} that linear independence of $\{\rho_i\}_{i\in I}$ implies non-contextuality. To prove the other implication, we show that if the states are linearly dependent, then they are contextual. Suppose, for the sake of contradiction, that the states $\{\rho_i\}_{i\in I}$ are linearly dependent and non-contextual.  Then, according to Definition \ref{def1}, there exist states $\{\sigma_{\lambda}\}_{\lambda}$ and a pseudo-POVM $\{G_{\lambda}\}_{\lambda}$ such that $\Tr(\rho_i M_k) = \sum_{\lambda} \Tr(\rho_i G_{\lambda})\Tr(\sigma_\lambda M_k)$ for all $i \in I$ and for all $M_k$, with $\Tr(\rho_i G_{\lambda})\geq 0$ for all $\lambda$. By the linearity of trace,
\begin{equation}
\Tr(\rho_i M_k) = \sum_{\lambda} \Tr \left[\sigma_{\lambda}\Tr(\rho_i G_{\lambda})M_k\right] = \Tr\left[\sum_{\lambda} \sigma_{\lambda}\Tr(\rho_i G_{\lambda})M_k\right].
\end{equation}
Since for arbitrary states $\omega$ and $\omega'$, the condition $\Tr(\omega M_k) = \Tr(\omega' M_k)$ for all $M_k$ implies $\omega = \omega'$. Hence
\begin{equation}
\rho_i = \sum_{\lambda} \Tr(\rho_i G_{\lambda})\sigma_{\lambda}.
\end{equation}
Let us now define the map
\begin{equation}
\Theta(\rho_i) := \sum_{\lambda}\Tr\left(\rho_i G_{\lambda}\right)\sigma_{\lambda}\otimes \sigma_{\lambda}.
\end{equation}
The partial trace over either subsystem is 
\begin{equation}
\Tr_A\left[\Theta(\rho_i)\right] = \Tr_B\left[\Theta(\rho_i)\right] = \sum_{\lambda}\Tr(\rho_i G_{\lambda})\sigma_{\lambda} = \rho_i
\end{equation} 
and since $\rho_i$ is pure, by monogamy of entanglement \cite{coffman2000distributed} we conclude that
\begin{equation}
\Theta(\rho_i) = \rho_i \otimes \rho_i, \quad \forall i \in I.
\end{equation}
Furthermore, since the states are linearly dependent, there exist real coefficients $\{\alpha_i\}_{i\in I}$, not all zero, such that $\sum_{i} \alpha_i \rho_i=0$. Let $\alpha_k\neq 0$ be one of such coefficients. The $k$-th state is then
\begin{equation}
\rho_k = -\sum_{i\neq k} \frac{\alpha_i}{\alpha_k}\rho_i.    
\end{equation}
Without loss of generality, we can assume that $\rho_k$ was the cause of the linear dependence, in such a way that when this is removed the remaining states are linearly independent. That is, without the loss of generality, we can assume that the set of states $\{ \rho_i : i \in I, i \neq k, \alpha_i \neq 0 \}$ is linearly independent. The effect of the map $\Theta$ on $\rho_k$ can be expressed in two different ways:
\begin{equation}
\Theta\left(\rho_k\right) = -\sum_{i\neq k} \frac{\alpha_i}{\alpha_k}\Theta(\rho_i) = -\sum_{i\neq k} \frac{\alpha_i}{\alpha_k} \rho_i \otimes \rho_i = -\sum_{i\neq k}\sum_{j \neq k}\frac{\alpha_i \delta_{ij}}{\alpha_k}\rho_i\otimes \rho_j
\end{equation}
and
\begin{equation}
\Theta(\rho_k)=\rho_k\otimes \rho_k = \sum_{i\neq k}\sum_{j \neq k } \frac{\alpha_i \alpha_j}{\alpha^2_k}\rho_i \otimes \rho_j.
\end{equation}
Comparing both expressions we get
\begin{equation}\label{coeff}
 \sum_{i\neq k }\sum_{j\neq k } \left(\frac{\alpha_i \delta_{ij}}{\alpha_k}+\frac{\alpha_i \alpha_j}{\alpha^2_k}\right)\rho_i\otimes \rho_j = 0.
\end{equation}
We argued before that $\{\rho_i\}_{i\neq k}$ is a set of linearly independent states. Therefore, also the tensor products $\rho_i \otimes \rho_j$ are linearly independent and \eqref{coeff} implies
\begin{equation}
\alpha_i \delta_ {ij} + \dfrac{\alpha_i\alpha_j}{\alpha_k} = 0, \quad \forall i,j \neq k.
\end{equation}
If $i=j$ we get $\alpha_i = 0$ or $\alpha_i = - \alpha_k \neq 0$ for all $i\neq k$. If $i \neq j$, then $\alpha_i \alpha_j = 0$. There are only two possible scenarios which are compatible with both conditions: either all coefficients are zero, or there exists exactly one non-vanishing coefficient. The former case implies $\rho_k = 0$, which is not a quantum state. For the latter, let $k'$ be the index corresponding to the non-vanishing coefficient. Then $\rho_k = \rho_{k'}$, i.e. two states are equal, which implies that the set of states we considered is in fact linearly independent.
\end{proof}

\section{Counter-example: nonlinear transformation that break contextuality}
Here we show that there exists nonlinear maps that do not map contextual sets of pure states to non-contextual sets of pure states. While the example that we provide is a valid nonlinear map, it is rather artificial.

For a single qubit, $\dim(\mathcal{H}) = 2$, the set of distinct pure states $\{ \rho_i \}_I$ is linearly independent hence non-contextual, whenever $|I| \leq 3$, that is, if the index set $I$ contains at most 3 indexes. If $|I| \geq 5$ then $\{ \rho_i \}_I$ must be linearly dependent hence contextual, the argument is by the pigeonhole principle since the space of hermitian operators has dimension $(\dim(\mathcal{H}))^2 = 4$. For $|I| = 4$ the distinct pure states $\{ \rho_i \}_I$ may be linearly dependent hence contextual if they lie in a plane, but they also may be linearly independent hence non-contextual.

Now consider a map $\Phi: \mathcal{D}(\mathcal{H}) \to \mathcal{D}(\mathcal{H})$ that maps the Bloch sphere to the disc in the $XZ$-plane, that is, to the set $\{ \rho \in \mathcal{D}(\mathcal{H}) : \Tr(\rho Y) = 0 \}$ where $Y$ is the corresponding Pauli operator. The map $\Phi$ can be chosen such that it maps pure states to pure states and that it is bijective, since both the Bloch sphere and the disc in the $XZ$-plane have the same cardinality.

We conclude the construction by investigating contextuality of pure states in the $XZ$-plane: let $\{ \rho_i \}_I \subset \{ \rho \in \mathcal{D}(\mathcal{H}) : \Tr(\rho Y) = 0 \}$ be set of pure distinct states in the $XZ$-plane, then they are linearly independent hence non-contextual if $|I| \leq 3$ and linearly dependent hence contextual if $|I| \geq 4$. It thus follows that the nonlinear map $\Phi$ maps contextual sets to contextual sets, and some non-contextual sets to non-contextual sets (for $|I| \leq 3$) while other non-contextual sets to contextual sets ($|I| = 4$).

\section{Weinberg's model for a single qubit}
We explicitly derive Eq. \eqref{eq: new_state}. The idea is to consider the normalization of the quantum state as a free parameter, i.e. we define the Hamiltonian as $h=n\bar{h}(a)$, where
\begin{equation}
\begin{gathered}
    n = \lvert \psi_1 \rvert^2 + \lvert \psi_2 \rvert^2, \\
    a = \frac{\lvert \psi_2\rvert^2}{n}=\frac{\sin^2\vartheta}{n}.
\end{gathered}
\end{equation}
The Weinberg equations \eqref{eq:weinberg} become 
\begin{equation}
\begin{gathered}
    i\frac{\partial \psi_1}{\partial t}=\frac{\partial h}{\partial \psi^*_1} = \psi_1 \bar{h}(a)+n\frac{\partial \bar{h}(a)}{\partial a}\frac{\partial a}{\partial \psi_1}=\psi_1\left[\bar{h}(a)-a\bar{h}'(a)\right], \\
    i\frac{\partial \psi_2}{\partial t}=\frac{\partial h}{\partial \psi^*_2} = \psi_2 \bar{h}(a)+n\frac{\partial \bar{h}(a)}{\partial a}\frac{\partial a}{\partial \psi_2}=\psi_2\left[\bar{h}(a)+(1-a)\bar{h}'(a)\right],
\end{gathered}
\end{equation}
with $\bar{h}'(a)\equiv \frac{\partial \bar{h}(a)}{\partial a}$. The solution is
\begin{equation}
    \psi_k = c_ke^{-i\omega_k(a)t}, \hspace{4mm} k=1,2
\end{equation}
where
\begin{equation}
\begin{gathered}
    \omega_1 = \bar{h}(a)-a\bar{h}'(a),\\
    \omega_2 = \bar{h}(a)+(1-a)\bar{h}'(a).
\end{gathered}
\end{equation}
The phase in the first component can be absorbed in the second one, resulting in the expression \eqref{eq: new_state} with 
\begin{equation}
    \omega(\vartheta)=-\omega_1+\omega_2=\bar{h}'(a)=\frac{\partial \vartheta}{\partial a}\bar{h}'(\vartheta)=\frac{1}{2\sin\vartheta\cos\vartheta}\bar{h}'(\vartheta).
\end{equation}

\end{document}